\documentclass[conference]{IEEEtran}
\usepackage[left=1.45cm,right=1.45cm,top=1.9cm,bottom=2.9cm]{geometry}
\usepackage{graphicx, array, amssymb, psfrag, graphics, setspace, stfloats}
\usepackage{amsfonts}
\usepackage[toc,abbreviations,nonumberlist,nogroupskip,nopostdot]{glossaries-extra}
\glssetcategoryattribute{general}{glossdesc}{firstuc}
\glssetcategoryattribute{abbreviation}{glossdesc}{title}
\usepackage{mathtools,bm}
\usepackage{amsmath}
\usepackage{amsthm}
\usepackage{algorithm,algorithmic}

\DeclareMathOperator*{\minimize}{min.}
\newcommand{\m}[1]{\mathbf{#1}}
\newcommand{\myabs}[1]{\left\lvert#1\right\rvert}

\newcommand{\E}[1]{\mathbb{E} \left[{#1}\right]}

\def\scalingFig{0.32}

\newtheorem{theorem}{Theorem}[section]

\newglossary*{nomenclature}{Nomenclature}

\newabbreviation{ISI}{ISI}					{inter-symbol interference}
\newabbreviation{DAC}{DAC}					{digital-{}to-analog converter}
\newabbreviation{ADC}{ADC}					{analog-{}to-digital converter}
\newabbreviation{DSP}{DSP}					{digital signal processing}
\newabbreviation{TX}{TX}					{transmitter}
\newabbreviation{RX}{RX}					{receiver}
\newabbreviation{PSK}{PSK}					{phase shift keying}
\newabbreviation{QAM}{QAM}					{quadrature-amplitude modulation}
\newabbreviation{FEC}{FEC}					{forward error correction}
\newabbreviation{SOP}{SOP}					{State-{}of-polarization}
\newabbreviation{FF}{FF}					{feed-forward}
\newabbreviation{FFEQ}{FFEQ}				{feed-forward equalizer}
\newabbreviation{BER}{BER}					{bit error rate}
\newabbreviation{SNR}{SNR}					{signal-{}to-noise ratio}
\newabbreviation{RSNR}{RSNR}				{required SNR}
\newabbreviation{SNDR}{SNDR}				{Signal-{}to-noise-{}and-distortion ratio}
\newabbreviation{SFDR}{SFDR}				{Spurious free dynamic range}
\newabbreviation{RC}{RC}					{raised cosine}
\newabbreviation{RRC}{RRC}					{root raised cosine}
\newabbreviation{ENOB}{ENOB}				{Effective number of bits}
\newabbreviation{GD}{GD}					{group delay}
\newabbreviation{CMD}{CMD}					{chromatic dispersion}
\newabbreviation{PMD}{PMD}					{polarization mode dispersion}
\newabbreviation{PDL}{PDL}					{polarization dependent loss}
\newabbreviation{ASE}{ASE}					{amplified spontaneous emission}
\newabbreviation{LMS}{LMS}					{least mean squares}
\newabbreviation{DC-LMS}{DC-LMS}			{delay-compensated LMS}
\newabbreviation{APA}{APA}					{affine projection algorithm}
\newabbreviation{NLMS}{NLMS}				{normalized LMS}
\newabbreviation{MMSE}{MMSE}				{minimum mean square error}
\newabbreviation{CMA}{CMA}					{Constant modulus algorithm}
\newabbreviation{RLS}{RLS}					{recursive least squares}
\newabbreviation{LS}{LS}					{least squares}
\newabbreviation{LO}{LO}					{local-oscillator}
\newabbreviation{CR}{CR}					{carrier-recovery}
\newabbreviation{ASIC}{ASIC}				{application-specific integrated circuits}
\newabbreviation{FIR}{FIR}					{finite impulse response}
\newabbreviation{IIR}{IIR}					{infinite impulse response}
\newabbreviation{DD-LMS}{DD-LMS}			{decision-directed least mean squares}
\newabbreviation{DD}{DD}					{decision-directed}
\newabbreviation{CS-DAC}{CS-DAC}			{current-steering DAC}
\newabbreviation{LSB}{LSB}					{least-significant bit}
\newabbreviation{MSB}{MSB}					{most-significant bit}
\newabbreviation{DNL}{DNL}					{Differential non-linearity}
\newabbreviation{INL}{INL}					{Integral non-linearity}
\newabbreviation{DGD}{DGD}					{differential group delay}
\newabbreviation{FFT}{FFT}					{fast-Fourier transform}
\newabbreviation{IFFT}{IFFT}				{inverse fast-Fourier transform}
\newabbreviation{DFT}{DFT}					{discrete Fourier transform}
\newabbreviation{IDFT}{IDFT}				{inverse discrete Fourier transform}
\newabbreviation{FT}{FT}					{Fourier transform}
\newabbreviation{MSE}{MSE}					{mean square error}
\newabbreviation{HD}{HD}					{hard decision}
\newabbreviation{SD}{SD}					{soft decision}
\newabbreviation{LDPC}{LDPC}				{low-density parity check}
\newabbreviation{CW}{CW}					{continuous wave}
\newabbreviation{PBC}{PBC}					{polarization beam combiner}
\newabbreviation{MIMO}{MIMO}				{multiple-input {}and multiple-output}
\newabbreviation{SISO}{SISO}				{single-input {}and single-output}
\newabbreviation{OPGW}{OPGW}				{optical ground wire}
\newabbreviation{ZF}{ZF}					{zero-forcing}
\newabbreviation{CAZAC}{CAZAC}				{Constant amplitude zero auto-correlation}
\newabbreviation{CFO}{CFO}					{carrier frequency offset}
\newabbreviation{MA}{MA}					{moving average}
\newabbreviation{DE}{DE}					{Differential evolution}
\newabbreviation{SA}{SA}					{simulated annealing}
\newabbreviation{DEM}{DEM}					{Dynamic element matching}
\newabbreviation{LUT}{LUT}					{lookup table}
\newabbreviation{DP}{DP}					{dynamic programming}
\newabbreviation{DPC}{DPC}					{digital pre-compensation}
\newabbreviation{NN}{NN}					{neural network}
\newabbreviation{MLSE}{MLSE}				{maximum likelihood sequence estimation}
\newabbreviation{LE}{LE}					{linear equalizer}
\newabbreviation{DFE}{DFE}					{Decision–feedback equalizer}
\newabbreviation{THP}{THP}					{Tomlinson-Harashima precoding}
\newabbreviation{HW}{HW}					{hardware}
\newabbreviation{PS}{PS}					{pilot sequence}
\newabbreviation{SW-LS}{SW-LS}				{sliding window least squares}
\newabbreviation{RD-Kalman}{RD-Kalman}      {radius-directed Kalman}		
\newabbreviation{TCM}{TCM}					{trellis coded modulation}
\newabbreviation{CER}{CER}					{constellation expansion ratio}
\newabbreviation{SER}{SER}					{symbol error rate}
\newabbreviation{AWGN}{AWGN}			    {additive white Gaussian noise}
\newabbreviation{A-RLS}{A-RLS}              {approximated recursive least squares}
\newabbreviation{DC-A-RLS}{DC-A-RLS}        {delay-compensated A-RLS}
\newabbreviation{SQNR}{SQNR}                {signal-{}to-quantization-noise ratio}
\newabbreviation{TS}{TS}                    {Training Sequence}
\newabbreviation{Pol}{Pol}                  {Polarization}
\newabbreviation{RMS}{RMS}                  {root mean square}
\newabbreviation{PMF}{PMF}                  {probability mass function}
\newabbreviation{CMF}{CMF}                  {cumulative mass function}
\newabbreviation{PDF}{PDF}                  {probability distribution function}
\newabbreviation{GCS}{GCS}                  {geometric constellation shaping}
\newabbreviation{PCS}{PCS}                  {Probabilistic constellation shaping}
\newabbreviation{DM}{DM}                    {distribution matching}
\newabbreviation{SpSh}{SpSh}                {sphere shaping}
\newabbreviation{CCDM}{CCDM}                {constant composition distribution matching}
\newabbreviation{SM}{SM}                    {shell mapping}
\newabbreviation{ESS}{ESS}                  {enumerative sphere shaping}
\newabbreviation{PCDM}{PCDM}                {prefix-free code DM}
\newabbreviation{HiDM}{HiDM}                {hierarchical DM}
\newabbreviation{MPDM}{MPDM}                {multiset-partition distribution matching}
\newabbreviation{CDF}{CDF}                  {cumulative distribution function}
\newabbreviation{KL}{KL}                    {Kullback-Leibler}
\newabbreviation{AC}{AC}                    {arithmetic coding}
\newabbreviation{IID}{IID}                  {independent and identically distributed}
\newabbreviation{PAS}{PAS}                  {probabilistic amplitude shaping}
\newabbreviation{ADM}{ADM}                  {Arithmetic distribution matching}
\newabbreviation{M-ADM}{M-ADM}              {Markovian arithmetic distribution matching}

\graphicspath{{Fig/}} 
\begin{document}

\title{Joint Precoding and Probabilistic Constellation Shaping using Arithmetic Distribution Matching}
\author{
    \IEEEauthorblockN{Ramin Babaee\IEEEauthorrefmark{1}, 
    Shahab Oveis Gharan\IEEEauthorrefmark{2}, 
    and Martin Bouchard\IEEEauthorrefmark{1}} 
    \IEEEauthorblockA{\IEEEauthorrefmark{1}School of EECS, University of Ottawa, Ottawa, ON  K1N 6N5, Canada, \{ramin.babaee,bouchm\}@uottawa.ca}
    \IEEEauthorblockA{\IEEEauthorrefmark{2}Ciena Corp., Ottawa, ON K2K 0L1, Canada, soveisgh@ciena.com}
}

\maketitle
 
\begin{abstract}
The problem of joint shaping and precoding is studied in this paper. We introduce statistical dependencies among consecutive symbols to shape the constellation while minimizing the total transmit power when the signal goes through the precoding filter. We propose a stationary Markovian model for optimizing the transition probability of transmit symbols to avoid high-energy sequences when convolved with the precoding filter. A new algorithm based on arithmetic coding is proposed to generate a shaped sequence of symbols with the given Markov model transition probabilities.
\end{abstract}

\begin{IEEEkeywords}
Precoding, probabilistic constellation shaping, arithmetic distribution matching.
\end{IEEEkeywords}

\section{Introduction}

According to Shannon's famous channel capacity theorem \cite{6773024,1697831}, a continuous Gaussian distributed signal can achieve the capacity of \gls*{AWGN} channel when combined with an ideal channel coding algorithm. \gls*{PCS} has been a significant area of research in modern communication systems, aiming to optimize the distribution of constellation points to improve the overall performance of the system. \gls*{PCS} applies a non-uniform distribution to the constellation points while keeping the minimum Euclidean distance the same. The \gls*{PAS} framework proposed in \cite{7307154} provides an elegant solution for integrating shaping into existing \gls*{FEC} architectures. In \gls*{PAS}, only the amplitudes of constellation points are shaped and the signs of the symbols are determined by the uniformly distributed parity bits generated by a systematic FEC encoder. \gls*{PAS} facilitates the independent design of \gls*{FEC} and \gls*{PCS}, thereby enabling the utilization of readily available FEC codes in combination with independently optimized shaping algorithm. 

\gls*{PCS} techniques can be categorized into two main subgroups, \gls*{DM} and \gls*{SpSh}. In \gls*{DM}, the target probability distribution is \textit{directly} enforced on a low-dimensional signal by a distribution matching algorithm. In other words, a \gls*{DM} technique transforms a sequence of random bits into a sequence of symbols with a desired distribution. In \gls*{SpSh}, an $n$-dimensional sphere, where $n$ is a large number, is selected from an $n$-dimensional cube (also known as $n$-dimensional hypercube) to achieve a target rate, \textit{indirectly} inducing a Gaussian distribution on the low-dimensional signal. 

One of the notable \gls*{DM} algorithms is \gls*{CCDM} \cite{7322261}. \gls*{CCDM} \cite{7322261} operates by producing sequences that have a fixed number of occurrences for each amplitude value that satisfies a target composition. Consequently, all the generated sequences have the same energy and are basically permutations of the base sequence with a specific composition. The well-known \gls*{SpSh} techniques are \gls*{ESS} \cite{8895789,8850066,9868127,10206608} and \gls*{SM} \cite{335969,8599055}. In \gls*{ESS}, all amplitude sequences are selected that satisfy a maximum total energy constraint. The other approach to sphere shaping is to sort the sequences based on their energies. Sequences within the same shell energy can be sorted lexicographically. In \gls*{SM}, a divide-and-conquer approach is utilized to find the mapping between indices and amplitude sequences. This is achieved by breaking the $n$-dimensional indexing problem into two $n/2$-dimensional mapping. The total number of steps is $\lceil \log_2{n} \rceil$.

In addition to shaping, equalization also plays a crucial role in communication systems. The main drawback of linear equalization at the receiver is noise enhancement in severely distorted channels. This causes a significant increase in the noise power at the equalizer output, particularly when the channel has zeros close to the unit circle \cite{proakis}. \gls*{DFE} \cite{1455678} combats this issue by using previously detected symbols and canceling the \gls*{ISI}. The main drawback is the propagation of errors when a symbol is detected incorrectly. To mitigate this problem, if the channel state information is available at the transmitter, the feedback equalizer can be designed and implemented at the transmit end. However, this results in an increase in transmit signal power. 

\gls*{THP} \cite{Tomlinson1971NewAE, 1091221} applies a non-linear modulo operation in order to bound the amplitude of the transmitted signal, which overcomes the increase in transmit power. Although \gls*{THP} has been initially proposed for ISI cancellation at the transmitter for \gls*{SISO} channels, it has been extended to spatial equalization in \gls*{MIMO} systems where inter-user interference cancellation is performed \cite{991747,1415910,1494693,6797864}. An advantage of \gls*{THP} is that coded modulation techniques such as \gls*{TCM} can be utilized along with \gls*{THP} to achieve close to channel capacity performance \cite{120349}. Although \gls*{THP} can also be utilized in conjunction with recent coding algorithms such as \gls*{LDPC} codes, it cannot be used with constellation shaping methods in a straightforward manner. Signal characteristics at the input of the precoder can be completely modified by the non-linear modulo operator, resulting in a non-optimal signal distribution at the channel input. 
 
The proposed \gls*{PCS} algorithms in the literature are designed without considering linear equalization at the transmitter. However, the applicability of PCS approaches may be very limited in scenarios where linear pre-equalization of the channel is employed at the transmitter and when there are constraints on the transmit signal power. On the other hand, existing precoding algorithms such as \gls*{THP} cannot be used directly with the shaping algorithms. 



In this paper, we focus on the problem of joint precoding and probabilistic constellation shaping. We introduce statistical dependencies between consecutive symbols in order to minimize the transmit power. This is accomplished by formulating the shaping problem as a Markovian process. We show that the problem is convex and can be solved numerically. A new arithmetic distribution matching scheme is then presented for transforming binary data to sequences with desired conditional probabilities. The proposed algorithm can be employed in many applications such as coherent optical communications where the transmitter equalization response is known and static. The integration of the proposed algorithm with FEC is beyond the scope of this paper and is an interesting topic for future research.

The rest of the paper is organized as follows. Section \ref{sec:prob_precoding} discusses the proposed shaping scheme. We propose \gls*{M-ADM} algorithm in Section \ref{sec:madm} that is used to shape the constellation points for a Markov model. The simulation results are provided in Section \ref{sec:simulaion_results} followed by the conclusions in Section \ref{sec:conclusion}.

\textbf{Notation:} Random variables are denoted by italic capital letters $X$ and realizations by italic small letters $x$. Vectors and matrices are represented by bold non-italic letters $\m{X}$. $(\cdot)^\text{T}$ indicates transpose of a vector/matrix. Symbol $\E{\cdot}$ is used for statistical expectation operation. Sets are represented by calligraphic font $\mathcal{R}$ and the cardinality of set $\mathcal{R}$ is denoted by $|\mathcal{R}|$.

\section{Joint Shaping and Precoding}
\label{sec:prob_precoding}

\begin{figure}[!tb]
\footnotesize
\includegraphics[scale=0.6]{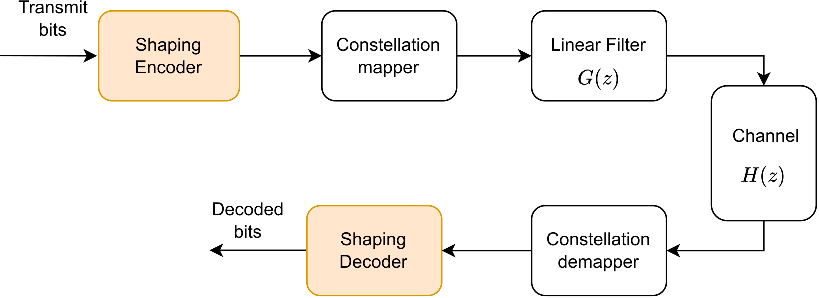}
\caption{Block diagram of transmitter, channel, and receiver for the proposed joint shaping and precoding algorithm.}
\label{fig_thp:shaping_enc_dec}
\end{figure}

In this section, we derive a probabilistic constellation shaping algorithm for joint shaping and pre-equalization. Let $g_m$ represent the precoding filter applied at the transmitter with $z$-transform $G(z) = \frac{1}{H(z)}$ where $H(z)$ is the $z$-transform of the channel. We assume $g_m$ is an $L$-tap \gls*{FIR} filter. The incoming bits are mapped to symbols $a_m$ which are drawn from an $M_B$-ASK constellation set $\mathcal{C}=\{\pm1, \pm3, \pm{(M_B-1)}\}$, where $M_B$ is an even integer. Let $\m{A}(a_{m-L+1}, \dots, a_{m}) = [a_{m-L+1}, \dots, a_{m}]$ denote a vector of $L$ consecutive transmit symbols. The precoded signal power can be formulated as
\begin{equation}
\begin{aligned}
    \sigma^2_T & = \E{\myabs{\m{g}^\text{T} \m{A}(a_{m-L+1}, \dots, a_{m})}^2} \\
               & = \sum_{a_{m-L+1}} \ldots \sum_{a_{m}} \Big( P(a_{m-L+1}, \dots, a_{m}). \\
               & \hspace{15ex} \myabs{\m{g}^\text{T} \m{A}(a_{m-L+1}, \dots, a_{m})}^2 \Big), \\
\end{aligned}
\end{equation}
where $\m{g} = \big[g_{L-1}, \dots, g_1, g_0\big]^\text{T}$ and $P(a_{m-L+1}, \dots, a_{m})$ is the \gls*{PMF} of sequence $\m{A}(a_{m-L+1}, \dots, a_{m})$ which we assume is being generated by our \gls*{PCS} encoder engine. The design objective is to optimize the sequence probabilities such that the information rate of the system is at least $R$ bits/sym and transmit power is minimized. The block diagram of the transmitter and receiver is illustrated in Fig. \ref{fig_thp:shaping_enc_dec}.

Note that when $g_m$ is the Kronecker delta function $\delta_n$ and therefore no filtering is applied at the transmitter, this optimization problem translates into the well-known probabilistic constellation shaping, where the solution is given by the Maxwell-Boltzmann distribution $P_A(a) = K (\lambda) e^{-\lambda a^2}$, where $\lambda$ determines the information rate and $K(\lambda)$ is the normalization factor.


In our proposed method, PCS is achieved by an encoder engine which introduces dependency between consecutive transmitted symbol by means of a Markov model. In other words, output symbols are generated through a Markov model of order $L-1$ with optimized transition probabilities. We first assume that $L=2$, i.e., a $2$-tap transmit equalizer, and derive the optimization problem and then generalize it for an arbitrary $L$. The information rate of a Markov model can be calculated from its conditional entropy as \cite[Chapter 4.2]{cover_book}
\begin{equation}
\mathbb{H}(a_{m}|a_{m-1}) = - \sum_{a_{m-1}} \sum_{a_{m}} P(a_{m-1},a_{m}) \log{P(a_{m}|a_{m-1})},
\end{equation}
where $P(a_{m}|a_{m-1})$ indicates probability of sending symbol $a_m$ conditioned on symbol $a_{m-1}$ was being sent in previous time-slot. Here, our aim is to minimize the signal power at the transmitter output subject to the constraint on the information rate of the \gls{PCS} encoder engine. Hence, we can formulate the problem as the following optimization problem
\begin{equation}
\label{eq_pcsh:form1}
\begin{aligned}
     & {\minimize_{P(a_{m-1},a_{m})}} && \E{ \myabs{\m{g}^\text{T} \m{A}(a_{m-1},a_{m})}^2 } \\
     & \text{\hspace{2ex} s.t.} && \mathbb{H}(a_{m}|a_{m-1}) \ge R \\
     & && \pi_{a_{m}} = \pi_{a_{m-1}}, \quad \forall a_{m-1}=a_{m} \in \mathcal{C} 
\end{aligned}
\end{equation}
where $R$ denotes the rate constraint on the PCS encoder engine, and $\pi_{a_m}$ and $\pi_{a_{m-1}}$ are the marginal \gls*{PMF}s of symbol $a_m$ and $a_{m-1}$, respectively and can be obtained as
\begin{align}
   & \pi_{a_m}      = \sum_{a_{m-1}}P(a_{m-1},a_m), \\ 
   & \pi_{a_{m-1}}  = \sum_{a_{m}}P(a_{m-1},a_m).
\end{align}
The proposed shaping algorithm forms a Markov chain in which the probability distribution of symbol $a_m$ depends on the value of symbol $a_{m-1}$. In order to ensure that the probability distribution remains unchanged from $a_{m-1}$ to $a_m$, we need a stationary distribution for the Markov process. This requirement is satisfied by the second constraint of (\ref{eq_pcsh:form1}). The optimization problem (\ref{eq_pcsh:form1}) can be formulated as a function of $P(a_{m-1},a_{m})$ as
\begin{equation}
\begin{aligned}
         & \minimize_{P(a_{m-1},a_{m})} \; && \sum_{a_{m-1}} \sum_{a_{m}} P(a_{m-1},a_{m}) \myabs{\m{g}^\text{T} \m{A}(a_{m-1},a_{m})}^2 \\
         & \hspace{4ex} \text{s.t.  } && \mathbb{H}(a_{m}|a_{m-1}) = -\sum_{a_{m-1}} \sum_{a_{m}} \Big( P(a_{m-1},a_{m}) \\ & && \hspace{6ex} \log{\frac{P(a_{m-1},a_{m})}{\sum_{a_{m}}P(a_{m-1},a_{m})}} \Big) \ge R,\\
         & && \sum_{a_{m-1}}P(a_{m-1},s) = \sum_{a_{m}}P(s,a_{m}), \quad \forall s \in \mathcal{C} \\
         & && \sum_{a_{m-1}} \sum_{a_{m}}P(a_{m-1},a_{m}) = 1, \\
         & && P(a_{m-1},a_{m}) \ge 0, \quad \forall a_{m-1}, a_{m} \in \mathcal{C}.\\
\end{aligned}
\end{equation}

The generalization for an arbitrary $L$-tap transmitter equalizer is as follows
\begin{equation}
\label{eq_pcsh:form2}
\begin{aligned}
         & \minimize_{P} && \sum_{a_{m-L+1}} \ldots \sum_{a_{m}} P(a_{m-L+1}, \dots, a_{m}) \\ & && \hspace{18ex} \myabs{\m{g}^\text{T} \m{A}(a_{m-L+1}, \dots, a_{m})}^2 \\
         & \hspace{1ex} \text{s.t.} && \mathbb{H}(a_{m}|a_{m-1},\dots,a_{m-L+1}) = \\
         & && -\sum_{a_{m-L+1}} \ldots \sum_{a_{m}} \Big( P(a_{m-L+1}, \dots, a_{m}) \\ & && \hspace{8ex} \log{\frac{P(a_{m-L+1}, \dots, a_{m})}{\sum_{a_{m}}P(a_{m-L+1}, \dots, a_{m})}} \Big) \ge R,\\
         & && \sum_{a_{m-L+1}}P(a_{m-L+1},s_1\ldots, s_{L-1}) = \\ & && \hspace{2ex}  \sum_{a_{m}}P(s_1,\ldots, s_{L-1}, a_{m}), \quad \forall s_1,\ldots, s_{L-1} \in \mathcal{C} \\
         & && \sum_{a_{m-L+1}} \ldots \sum_{a_{m}} P(a_{m-L+1}, \dots, a_{m}) = 1, \\
         & && P(a_{m-L+1}, \dots, a_{m}) \ge 0, \quad \forall a_{m-L+1}, \dots, a_{m} \in \mathcal{C}.\\
\end{aligned}
\end{equation}
Note that the objective function and all constraints except conditional entropy are affine functions of $P(a_{m-L+1}, \dots, a_{m})$. The concavity of the conditional entropy function is proved in Appendix \ref{AppendixCondEntr}. Therefore, the optimization problem is convex and can be efficiently calculated using the interior-point method \cite{boyd2004convex}. 

\section{Markovian Arithmetic Distribution Matching}
\label{sec:madm}
\gls*{ADM} algorithms proposed in the literature assume that the desired encoded sequence is \gls*{IID}; therefore, each output symbol is drawn independently from a fixed marginal distribution. We propose \gls*{M-ADM} that is suitable for target distributions with transition probabilities. While we present the algorithm for a first-order Markov chain, the extension to higher orders follows similar principles.

\gls*{ADM} transforms Bernoulli($\frac{1}{2}$) distributed binary sequence $\bm{s}$ of length $k$ into the codeword $\bm{c}$ of length $n$ with the desired marginal distribution. The codeword symbols are drawn from the output alphabet $\mathcal{C} = \{c_0, c_1, ..., c_{n-1}\}$. Each unique input sequence $\bm{s}_i$, $i=0,...,2^k-1$ corresponds to a specific subinterval within the interval $[0, 1)$. Similarly, each codeword $\bm{c}_i$, $i=0,...,2^k-1$ corresponds to a distinct subinterval $I(\bm{c}_i)$ within $[0, 1)$. These non-overlapping subintervals form a complete partition of $[0, 1)$. An input sequence $\bm{s}$ is encoded by finding the subinterval $I(\bm{c})$ which is a subset of the input subinterval. The associated codeword $\bm{c}$ is the output. The received codeword $\bm{c}$ identifies the corresponding subinterval $I(\bm{c})$ and can be correctly decoded back to the original input sequence $\bm{s}$ that contains $I(\bm{c})$.

For the purpose of clarity and consistency, we adopt the same terminologies as those used in \cite{7052132,7322261} to describe the basic operations of the proposed encoder and decoder. Source interval and code interval denote the probability ranges for the symbols on the source side and the encoded output side, respectively. An interval $[a, b)$ is said to identify an interval $[c, d)$ if $[a, b) \subseteq [c, d)$. Interval refinement is defined as the partitioning of the interval based on the \gls*{CMF}, followed by the selection of one of the resulting subintervals. We assume the source has a Bernoulli($\frac{1}{2}$) distribution $p_\text{source}$ and the conditional distribution of the desired output is defined as $p_\text{code}(c_i|c_{i-1})$. The refinement of the source interval is based on the independent probability distribution $p_\text{source}$, while code interval refinement depends on the conditional probability $p_\text{code}(c_i|c_{i-1})$, which requires knowledge of the previous symbol. 



The \gls*{M-ADM} encoding process begins by initializing the source and code intervals to $[0, 1)$. In a loop, the encoder reads a new input bit. With a new input bit, the source interval is recursively refined on the basis of the source \gls*{PMF}. That means that the subintervals $[0,0.5)$ and $[0.5,1)$ are assigned to input 0 and 1, respectively. If the first bit is 0, the lower subinterval of the source interval is selected; otherwise the upper subinterval is chosen as the new source interval. After reading the input, if the refined source interval identifies a candidate in the code list (based on $p_\text{code}(c_i|c_{i-1})$), the corresponding code symbol is appended to the output and the code interval is updated accordingly. This procedure continues until the source interval is not identifiable by any of the candidate intervals. The refinement and output assignment process continues as long as there are overlaps. After all input bits are read, the encoder finds two candidates that identify the source interval. The larger of the two intervals is chosen, and the corresponding symbols are outputted. The pseudocode of the encoder is described in Algorithm \ref{alg:enc}.

\begin{algorithm}[t]
\caption{\gls*{M-ADM} encoder}
\begin{algorithmic}[1] 
\label{alg:enc}
\STATE \textbf{Input:} 
    \STATE \hspace{1ex} Input sequence:         $\bm{s}$
    \STATE \hspace{1ex} Source distribution: $p_{\text{source}}$
    \STATE \hspace{1ex} Conditional target distribution: $p_{\text{code}}(c_i | c_{i-1})$  
\STATE \textbf{Output:}
    \STATE \hspace{1ex}  Encoded sequence: $\bm{c}$ 

\STATE \textbf{Initialization:} 
    \STATE $\bm{c} \gets []$ \COMMENT{Output buffer}
    \STATE $I_{\text{source}} \gets [0, 1)$ \COMMENT{Initial source interval}
    \STATE $I_{\text{code}} \gets [0, 1)$  \COMMENT{Initial code interval}
    \STATE $c_\text{last} \gets c_0$ \COMMENT{Last encoded symbol} 

\FOR {$k = 1$ \TO $\text{length}(\bm{s})$} 
\STATE $s_\text{new} \gets \bm{s}[k]$  \COMMENT{Read new input bit}

\STATE $I_{\text{source}} \gets \text{RefineInterval}(I_{\text{source}}, s_\text{new}, p_{\text{source}})$

\WHILE {$I_{\text{source}}$ identifies a code interval $j$}
    \STATE $\text{Append } \text{code}[j] \text{ to } \bm{c}$  
    \STATE $I_{\text{code}} \gets \text{RefineInterval}(I_{\text{code}},\text{code}[j],p_\text{code}(c_i | c_\text{last}))$ 
    \STATE $c_\text{last} \gets \text{code}[j]$     
\ENDWHILE
\ENDFOR

\STATE \textbf{Finalization:}
\STATE $I^u_{\text{code}} \gets \text{RefineInterval}(I_{\text{code}}, p_{\text{code}})$ \COMMENT{Until $I^u_{\text{code}}$ identifies $I_{\text{source}}$} 
\STATE $I^l_{\text{code}} \gets \text{RefineInterval}(I_{\text{code}}, p_{\text{code}})$ \COMMENT{Until $I^l_{\text{code}}$ identifies $I_{\text{source}}$}
\STATE $I_{\text{code}} \gets \max(I^u_{\text{code}}, I^l_{\text{code}})$ 
\STATE $\text{Append corresponding symbols to } \bm{c}$

\end{algorithmic}
\end{algorithm}

The arithmetic decoder operates iteratively to reconstruct the original bit sequence. The decoder reads the symbols from the encoded data stream $\bm{c}$ until $I_\text{code}$ identifies a unique source interval. The corresponding bit is appended to the output buffer $\bm{s}$, and this selected candidate becomes the new $I_\text{source}$. This refinement and selection process continues until , In the absence of a match, the decoder resumes reading symbols from input. The decoding process is terminated when the $n$ bits, representing the original source sequence length, are written to the output buffer. Algorithm \ref{alg:dec} presents a pseudocode for the decoder. 

Note that since the proposed \gls*{M-ADM} generates variable-length outputs, buffers are required to average the rate. The proposed algorithm is referred to as probabilistic precoding for the rest of this paper.

\begin{algorithm}[t]
\caption{\gls*{M-ADM} decoder}
\begin{algorithmic}[1]
\label{alg:dec}
\STATE \textbf{Input:} 
    \STATE \hspace{1ex} Encoded sequence: $\bm{c}$
    \STATE \hspace{1ex} Source distribution: $p_{\text{source}}$
    \STATE \hspace{1ex} Conditional target distribution: $p_{\text{code}}(c_i | c_{i-1})$  
    \STATE \hspace{1ex} length of output sequence $\bm{s}$: $n$
\STATE \textbf{Output:}
    \STATE \hspace{1ex} Source sequence: $\bm{s}$ 

\STATE \textbf{Initialization:} 
    \STATE $\bm{s} \gets []$ \COMMENT{Output buffer}
    \STATE $I_{\text{source}} \gets [0, 1)$ \COMMENT{Initial source interval}
    \STATE $I_{\text{code}} \gets [0, 1)$  \COMMENT{Initial code interval}
    \STATE $c_\text{last} \gets c_0$ \COMMENT{Last input symbol}
    \STATE $k \gets 0$
\WHILE { length$(\bm{s}) < n$}
    \STATE $c_\text{new} \gets \bm{c}[k]$  \COMMENT{Read new input symbol}
    \STATE $I_{\text{code}} \gets \text{RefineInterval}(I_{\text{code}}, c_\text{new}, p_{\text{code}}(c_i | c_\text{last}))$    
    \STATE $c_\text{last} \gets \bm{c}[k]$         
    \STATE $k \gets k+1$
\WHILE {$I_{\text{code}}$ identifies a source interval $j$}
    \STATE $\text{Append} \text{ bit $j$ to } \bm{s}$  
    \STATE $I_{\text{source}} \gets \text{RefineInterval}(I_{\text{source}},j, p_\text{source})$ 
\ENDWHILE

\ENDWHILE

\end{algorithmic}
\end{algorithm}

\section{Simulation Results}
\label{sec:simulaion_results}

In order to assess the efficiency of the proposed algorithm, numerical results are obtained by running Monte Carlo simulations.  

We consider a second-order \gls*{IIR} channel with transfer function $H(z) = 1/(1+cz^{-1} + d z^{-2})$ where the precoding filter is $G(z) = 1+cz^{-1} + d z^{-2}$. The performance of the proposed probabilistic precoding scheme is investigated in Fig. \ref{fig_thp:pcs_gain_2ASK}, \ref{fig_thp:pcs_gain_4ASK}, and \ref{fig_thp:pcs_gain_8ASK} for a transmission rate of 1, 2, and 3 bits/sym, respectively. The $x$-axis is the parameter $c$ of the channel model, and the results for several values of the parameter $d$ are plotted. The $y$-axis is the shaping gain in dB, defined as the reduction in average power of the transmit signal relative to the \gls*{THP} method. The THP is used with unshaped constellation 2-ASK, 4-ASK, and 8-ASK corresponding to $R=1, 2,$ and $3$. The base constellation for the probabilistic precoding is 4-ASK, 8-ASK, and 16-ASK, respectively. The results are obtained by simulating a transmission over $10^7$ symbols and calculating the transmit signal power for each scheme. For probabilistic precoding, the optimum power is computed from the optimization problem (\ref{eq_pcsh:form2}). For comparison, the results for traditionally shaped constellation with linear pre-equalization and THP are also plotted. It is clear from the figures that using shaped constellation with either linear filtering or THP has a worse performance than the THP with uniform signaling in most channel conditions. We observe that a significant shaping gain is attained relative to THP for the proposed probabilistic precoding scheme. As the transmission rate increases, the shaping gain approaches the ultimate shaping gain of 1.53 dB. 

\begin{figure}[!tb]
\footnotesize
\psfrag{c}{$c$}
\includegraphics[scale=\scalingFig]{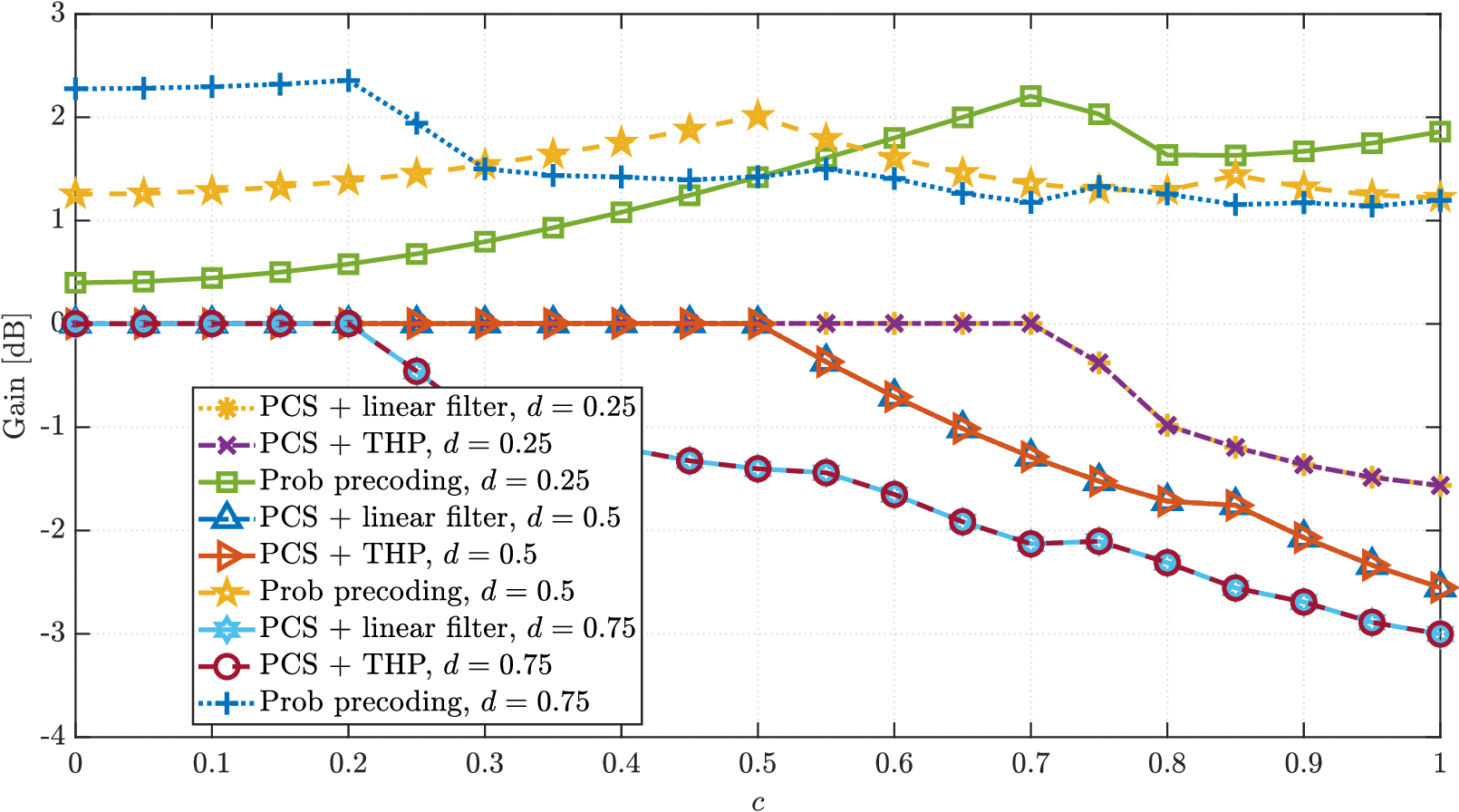}
\caption{Performance gain comparison of probabilistic precoding for transmission rate $R=1$ bit/sym over IIR channel $H(z) = 1/(1+cz^{-1} + d z^{-2})$ versus THP. The base constellation for THP is 2-ASK and for probabilistic precoding is 4-ASK.}
\label{fig_thp:pcs_gain_2ASK}
\end{figure}

\begin{figure}[!tb]
\footnotesize
\includegraphics[scale=\scalingFig]{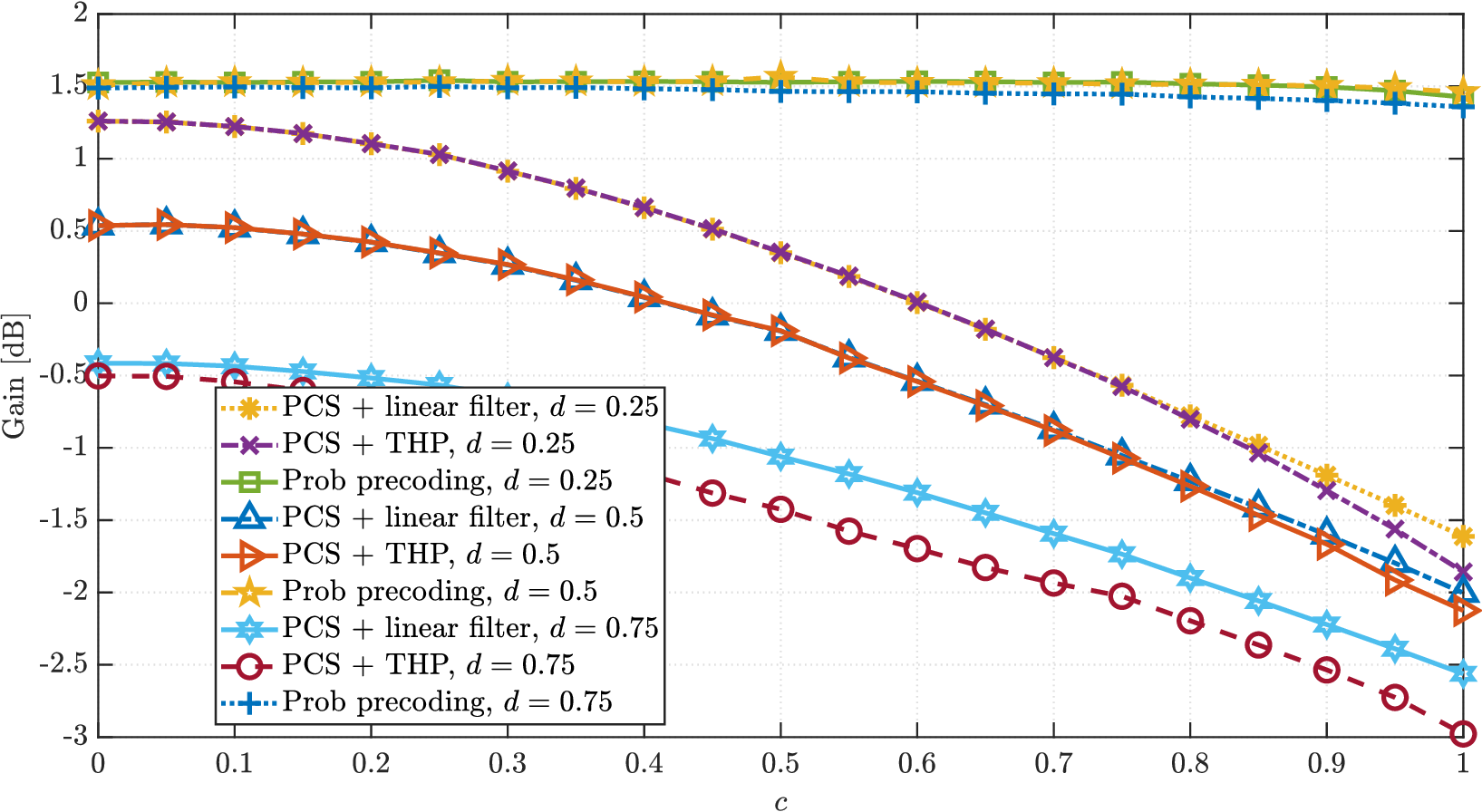}
\caption{Performance gain comparison of probabilistic precoding for transmission rate $R=2$ bit/sym over IIR channel $H(z) = 1/(1+cz^{-1} + d z^{-2})$ versus THP. The base constellation for THP is 4-ASK and for probabilistic precoding is 8-ASK.}
\label{fig_thp:pcs_gain_4ASK}
\end{figure}

\begin{figure}[!tb]
\footnotesize
\includegraphics[scale=\scalingFig]{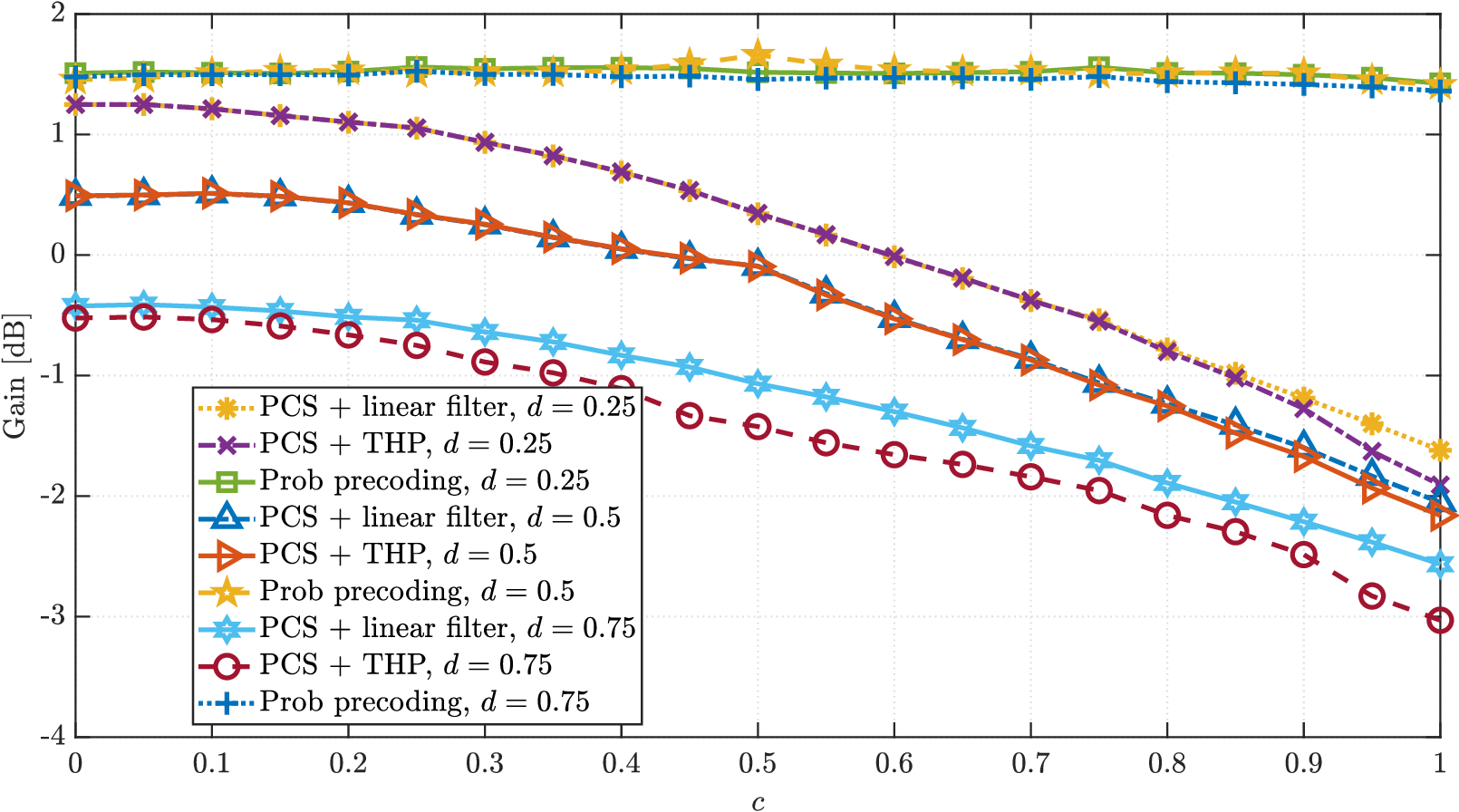}
\caption{Performance gain comparison of probabilistic precoding for transmission rate $R=3$ bit/sym over IIR channel $H(z) = 1/(1+cz^{-1} + d z^{-2})$ versus THP. The base constellation for THP is 8-ASK and for probabilistic precoding is 16-ASK.}
\label{fig_thp:pcs_gain_8ASK}
\end{figure}

Fig. \ref{fig_thp:gain_4ASK_vs_M} illustrates the impact of the size of base constellation $M_B$ on the performance of probabilistic precoding for a transmission rate of 2 bits/sym. The channel model is a first-order IIR filter $H(z) = 1/(1+cz^{-1})$. As evident in the figure, increasing the size of the shaped constellation leads to a significant improvement in heavily distorted channels. 

\begin{figure}[!tb]
\footnotesize
\includegraphics[scale=\scalingFig]{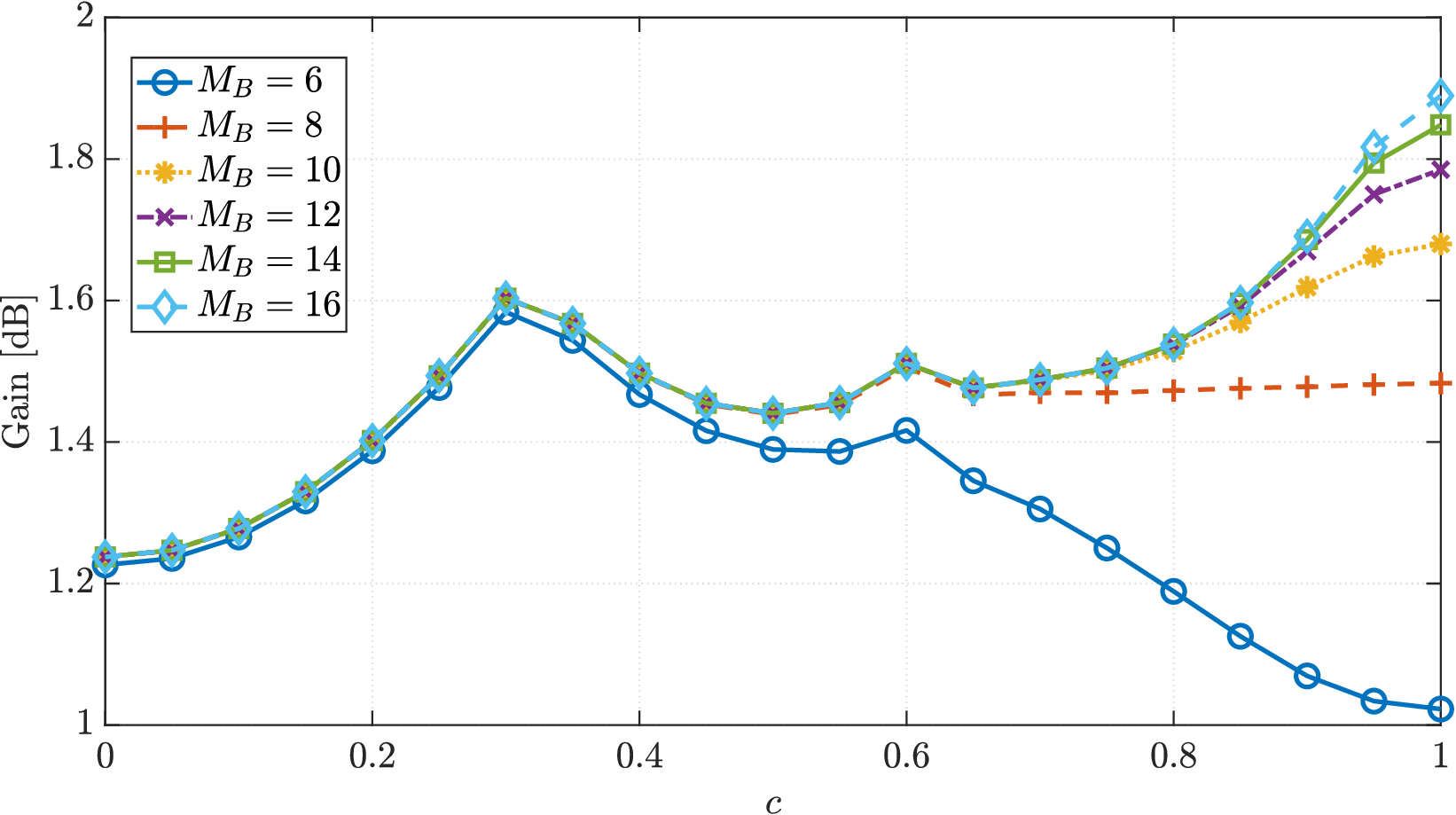}
\caption{Theoretical shaping gain of probabilistic precoding compared to THP for transmission rate $R=2$ bits/sym over IIR channel $H(z) = 1/(1+cz^{-1})$ for several values of $M_B$. The base constellation is $M_B$-ASK.}
\label{fig_thp:gain_4ASK_vs_M}
\end{figure}

Fig. \ref{fig_thp:gain_4ASK_AC_SM} compares the performance of \gls*{M-ADM} with the theoretical result. The channel model is $H(z) = 1/(1+cz^{-1})$ and the transmission rate is $R = 2$ bits/sym. For \gls*{M-ADM}, the length of the bit sequence of the encoder is 512 bits, corresponding to a block size of 256 symbols at the output of the encoder. Note that \gls*{M-ADM} produces variable-length output sequences, and therefore buffering is required to achieve the average sequence length of 256 symbols. As is evident in the figure, \gls*{M-ADM} operates close to the theoretical result.

\begin{figure}[!tb]
\footnotesize
\includegraphics[scale=\scalingFig]{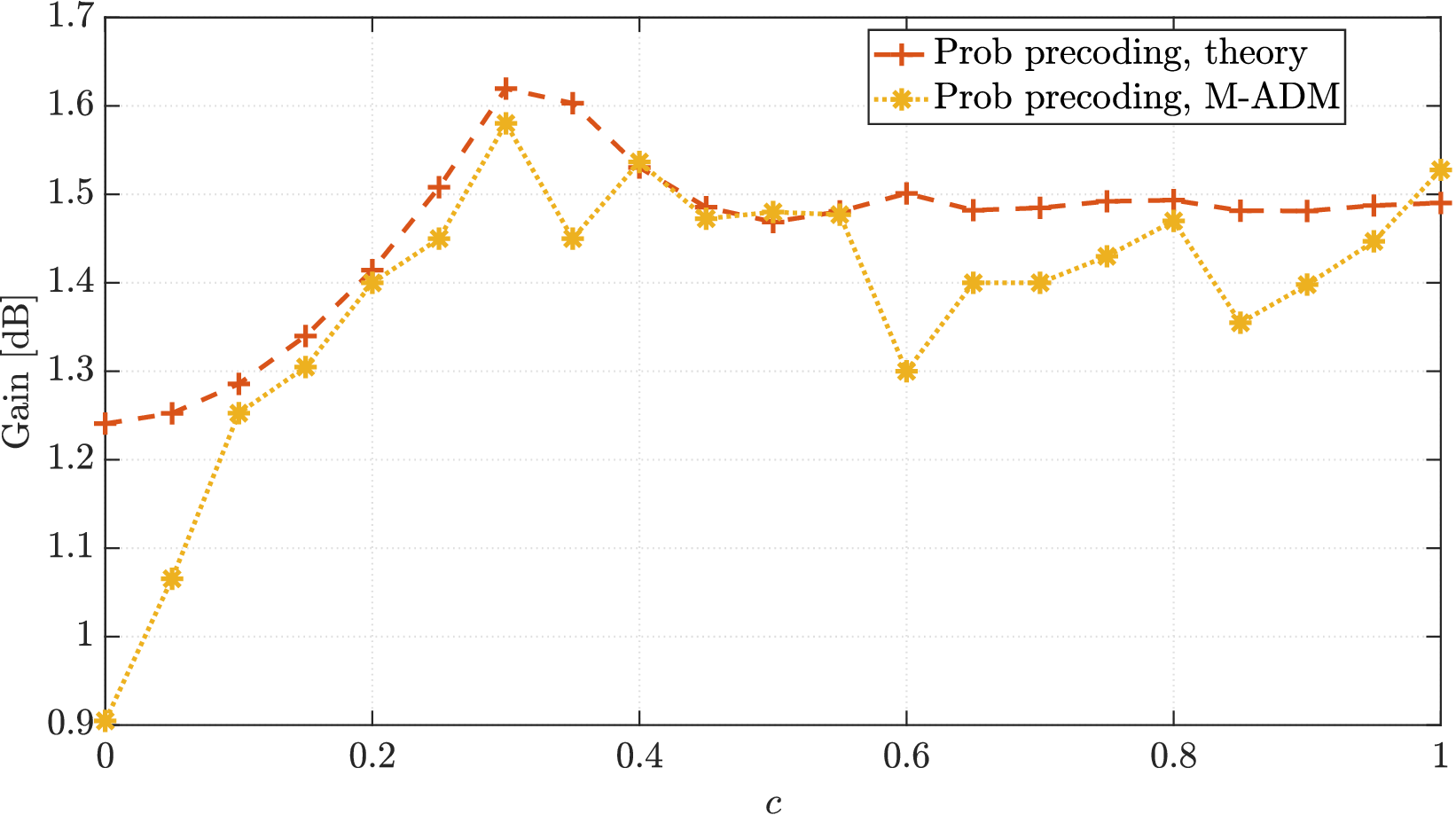}
\caption{Performance of the shaping gain achieved with \gls*{M-ADM} compared to the theoretical result. The transmission rate is $2$ bits/sym and the channel model is $H(z) = 1/(1+cz^{-1})$.}
\label{fig_thp:gain_4ASK_AC_SM}
\end{figure}

\section{Conclusion}
\label{sec:conclusion}

Under the assumption of transmit power constraint and linear pre-equalization at the transmitter, we introduced a stationary Markovian model in which the transition probability of transmit symbols is optimized. We then proposed a new distribution matching algorithm that can be used to implement the proposed Markov chain. The simulation results show promising shaping gains compared to the THP. The \gls*{M-ADM} scheme is a fixed-to-variable length transformer and consequently buffering and synchronization are needed for implementation in practical applications. As the next steps of this research, we will focus on finding practical mapping schemes where input and output codewords are fixed in length. The problem of integration of the proposed shaping algorithm with FEC is also an interesting topic to investigate. 

\appendices
\section{}
\label{AppendixCondEntr}
\begin{theorem}
The conditional entropy $H(Y|X)$ defined as
\begin{equation}
    H_P(Y|X) = -\sum_{x} \sum_{y} P_{XY}(x,y) \cdot \log{P(y|x)}
\end{equation}
is concave in joint probability mass function $P_{XY}$.
\end{theorem}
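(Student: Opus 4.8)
The plan is to reduce the statement to the joint convexity of the ``log-sum'' functional $(\bm p,\bm q)\mapsto\sum_y p_y\log(p_y/q_y)$, a classical fact, by writing $H_P(Y|X)$ as a sum over the values of $X$ of terms each of which is the negative of such a functional evaluated at two arguments that depend \emph{linearly} on the joint PMF $P_{XY}$.

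First I would eliminate the conditional PMF in favor of the joint one. With $P_X(x)=\sum_y P_{XY}(x,y)$ (a linear function of $P_{XY}$) and using $P(y|x)=P_{XY}(x,y)/P_X(x)$,
\begin{equation}
H_P(Y|X)=-\sum_x\sum_y P_{XY}(x,y)\,\log\frac{P_{XY}(x,y)}{P_X(x)}=-\sum_x D\!\left(\bm p_x \,\big\|\, \bm q_x\right),
\end{equation}
where, for each fixed $x$, $\bm p_x:=\big(P_{XY}(x,y)\big)_y$, $\bm q_x:=\big(P_X(x)\big)_y$ is the constant vector all of whose entries equal $P_X(x)$, and $D(\bm p\|\bm q):=\sum_y p_y\log(p_y/q_y)$ with the conventions $0\log(0/q)=0$, $p\log(p/0)=+\infty$. (Note $\bm q_x$ need not be a probability vector; the joint convexity invoked below does not require normalization.)

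Next I would invoke joint convexity. The scalar map $(p,q)\mapsto p\log(p/q)$ on $(0,\infty)^2$ is the perspective of the convex function $t\mapsto t\log t$, hence jointly convex (equivalently this is the log-sum inequality); summing over $y$, $(\bm p,\bm q)\mapsto D(\bm p\|\bm q)$ is jointly convex, so $-D$ is jointly concave. For each $x$ the map $P_{XY}\mapsto(\bm p_x,\bm q_x)$ is linear, and the composition of a concave function with an affine map is concave; thus $P_{XY}\mapsto -D(\bm p_x\|\bm q_x)$ is concave on the probability simplex, and $H_P(Y|X)$, a finite sum of such functions, is concave in $P_{XY}$.

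The main obstacle is less any single deduction than stating the elementary ingredient cleanly and handling degeneracies: one should either cite or quickly prove the joint convexity of $(p,q)\mapsto p\log(p/q)$ (cleanest via the perspective construction, or via the log-sum inequality), and check that both $\bm p_x$ and $\bm q_x$ are genuinely affine in $P_{XY}$ so the composition rule applies on the whole simplex, including boundary points where entries vanish — which the stated conventions accommodate by continuity (the offending terms tending to $0$). As a remark I would note the essentially equivalent route $H_P(Y|X)=\sum_x P_X(x)\,\mathbb H\!\big(P_{Y|X=x}\big)$, recognizing each summand as a perspective of the concave entropy functional composed with a linear map; I would nonetheless carry out the $D$-based argument since it avoids dividing by $P_X(x)$.
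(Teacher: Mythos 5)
Your proof is correct and takes essentially the same route as the paper's: both express $H_P(Y|X)$ as the negative of a KL-type divergence whose two arguments are affine functions of $P_{XY}$, and then invoke the joint convexity of that divergence composed with an affine map. The only difference is cosmetic --- the paper multiplies the second argument by the uniform PMF $U_Y$ so it can cite convexity of the standard normalized KL divergence (picking up an additive constant $\log|\mathcal{Y}|$ along the way), whereas you work per-$x$ with the un-normalized log-sum functional; your version avoids that normalization step and is somewhat more explicit about the boundary conventions.
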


\begin{proof}
Let $X$ and $Y$ be two discrete random variables with possible outcomes $\mathcal{X}$ and $\mathcal{Y}$ and marginal \gls*{PMF} $P_X(x)$ and $P_Y(y)$, respectively. Let $P_{XY}(x,y)$ denote the joint \gls*{PMF} of $X$ and $Y$, where $x \in \mathcal{X}$ and $y \in \mathcal{Y}$. The objective is to prove that
\begin{equation}
    H_{\lambda P_1 + (1-\lambda) P_2}(Y|X) \ge \lambda H_{P_1}(Y|X) + (1-\lambda) H_{P_2}(Y|X)
\end{equation}

The conditional entropy can be written as
\begin{equation}
\begin{aligned}
H_P(Y|X) & = -\sum_{x} \sum_{y} P_{XY}(x,y) \cdot \log{\big(\frac{P_{XY}(x,y)}{\sum_{y}P_{XY}(x,y)}\big)} \\
         & = -\sum_{x} \sum_{y} P_{XY}(x,y) \cdot \log{\big(\frac{P_{XY}(x,y)}{P_X(x)}\big)} \\
         & = -\sum_{x} \sum_{y} P_{XY}(x,y) \cdot \log{\big(\frac{P_{XY}(x,y)}{P_X(x) U_Y(y)}U_Y(y)\big)} \\
         & = -\sum_{x} \sum_{y} P_{XY}(x,y) \cdot \log{\big(\frac{P_{XY}(x,y)}{Q_{XY}(x,y)}U_Y(y)\big)} \\
\end{aligned}
\end{equation}
where $U_Y(y)$ is the PMF of a discrete uniform distribution on $Y \in \mathcal{Y}$ and $Q_{XY}(x,y) = P_X(x) U_Y(y)$. Note that we have
\begin{equation}
    U_Y(y) = \frac{1}{{\myabs{\mathcal{Y}}}}.
\end{equation}
We rewrite
\begin{equation}
\label{eq_appendix:entrop1}
\begin{aligned}
H_P(Y|X) & = -\sum_{x} \sum_{y} P_{XY}(x,y) \log{\big(\frac{P_{XY}(x,y)}{Q_{XY}(x,y)}\big)}  - \\ & \hspace{8ex} \sum_{x} \sum_{y} P_{XY}(x,y) \log{\frac{1}{{\myabs{\mathcal{Y}}}}}\\
        & = -\mathbb{D}(P \| Q)+\myabs{\mathcal{Y}}
\end{aligned}
\end{equation}
where $\mathbb{D}(P \| Q)$ is the informational divergence, also known as the \gls*{KL} divergence of two discrete distributions $P(x)$ and $Q(x)$ and is defined as
\begin{equation}
\mathbb{D}(P \| Q) = \sum_{x \in \mathcal{X}}  P(x) \log{\frac{P(x)}{Q(x)}}.
\end{equation}
Given the convexity of \gls*{KL} divergence in the pair of probability distributions $P$ and $Q$ \cite{boyd2004convex}, we have
\begin{equation}
\label{eq_appendix:entrop2}
\begin{aligned}
& \mathbb{D}(\lambda P_1 + (1-\lambda) P_2 \| \lambda Q_1 + (1-\lambda) Q_2)  \le \\ & \hspace{18ex} \lambda \mathbb{D}(P_1 \| Q_1) +  (1-\lambda) \mathbb{D}(P_2 \| Q_2)
\end{aligned}
\end{equation}
By substituting Eq. (\ref{eq_appendix:entrop1}) in (\ref{eq_appendix:entrop2}), we have
\begin{equation}
\begin{aligned}
   & \myabs{\mathcal{Y}} - H_{\lambda P_1 + (1-\lambda) P_2}(Y|X) \le \\ & \hspace{5ex} \lambda \big( \myabs{\mathcal{Y}} - H_{P_1}(Y|X) \big) + (1-\lambda)  \big( \myabs{\mathcal{Y}} - H_{P_2}(Y|X) \big)
\end{aligned}
\end{equation}
Finally,
\begin{equation}
    H_{\lambda P_1 + (1-\lambda) P_2}(Y|X) \ge \lambda H_{P_1}(Y|X) + (1-\lambda) H_{P_2}(Y|X).
\end{equation}
\end{proof}

\bibliography{IEEEabrv,main}
\bibliographystyle{IEEEtran}

\end{document}